\documentclass[11pt]{article}
\usepackage[margin=1in]{geometry}
\usepackage{amsmath,amssymb,amsthm}
\usepackage{booktabs,tabularx,multirow}
\usepackage{graphicx}
\usepackage{xcolor}
\usepackage{tikz}
\usetikzlibrary{arrows.meta,positioning,fit,backgrounds,calc}
\usepackage[most]{tcolorbox}
\usepackage{algorithm}
\usepackage{algpseudocode}
\usepackage{enumitem}
\usepackage{hyperref}
\hypersetup{colorlinks=true,linkcolor=blue!50!black,citecolor=blue!50!black,urlcolor=blue!50!black}
\setlist{itemsep=2pt,topsep=4pt}
\newtheorem{theorem}{Theorem}
\newtheorem{definition}{Definition}

\newcommand{\pgs}{\textsc{PGS}}
\newtcolorbox{keybox}[1]{colback=blue!3,colframe=blue!45!black,boxrule=0.6pt,arc=2pt,left=6pt,right=6pt,top=4pt,bottom=4pt,title={#1},fonttitle=\bfseries\small,coltitle=white}

\title{\textbf{Proof-Gated Signing:}\\[2pt] \Large Solver-Checked Transaction Guards that Hold Under State Drift\\ for Onchain AI Agents}
\author{Bravish Ghosh\\ \small Independent Researcher}
\date{\small Preprint --- September 2026}

\begin{document}
\maketitle

\begin{abstract}
AI agents that control wallets read attacker-reachable content (emails, documentation, tool outputs), so they can be steered into proposing harmful transactions. The usual last line of defense is a pre-signing check: a static allowlist, an LLM reviewer, or a transaction simulation. We point out a gap that all three share. A pre-signing check is evidence about the chain state \emph{at check time}, but the transaction executes in a later state that an adversary can shape through front-running, contract upgrades or token-parameter changes. We call the difference \emph{state drift}.

We present \emph{Proof-Gated Signing} (\pgs{}). \pgs{} simulates a proposed transaction, extracts its effects on the wallet, and uses an SMT solver to check a declarative value-and-permission policy for every price in an oracle-uncertainty band. It then compiles on-chain post-conditions and proves that \emph{every} execution satisfying them also satisfies the policy. The post-conditions cover wallet balance bounds, payee receipts, allowance caps and ownership. The agent's smart-contract wallet enforces them atomically, so the guarantee is about the transaction that actually executes, and it holds under arbitrary drift.

We evaluate \pgs{} on an open testbed of 260 scripted scenarios: 14 attack families, including five state-drift families and two that adapt to the defense, and 12 benign families. Harm is measured from attacker balances, never from the policy. \emph{On this suite}, \pgs{} prevented 131 of 140 harmful scenarios (93.6\%, 95\% CI 88--97\%) and passed 117 of 120 benign ones (97.5\%). Simulation-only checking prevented 57.9\% and a static allowlist 71.4\%. None of the 50 drift scenarios produced attacker gain under \pgs{}. In 40 of them the transaction reverted on-chain, and in the other 10 an attacker who read the post-conditions from the mempool found no profitable attack within them. The nine scenarios \pgs{} did not prevent come from one adaptive family that drains value within per-transaction policy; the per-session budget bounds it (at most \$4.76k against a \$5k budget). The median overhead is 41k gas and about 0.1--0.2\,s per check. An independent audit of our first draft found a payment-crediting flaw, which we fixed and quantify by ablation. Code, contracts and all raw logs are released, and a rerun reproduced all 1{,}300 runs exactly.
\end{abstract}

\section{Introduction}
Autonomous agents are increasingly trusted with money. Treasury bots pay invoices, trading assistants rebalance portfolios, and ``agentic wallets'' act on natural-language instructions. Every one of these agents reads content an attacker can influence: an email, a token's metadata, a governance post, the output of a quoting tool. Indirect prompt injection~\cite{greshake2023not} and the benchmarks that followed~\cite{zhan2024injecagent,debenedetti2024agentdojo} show that such content can redirect agents. For an agent that signs transactions, the question is therefore not \emph{whether} it can be misled, but \emph{what is guaranteed about the transactions it signs when it is}.

The last line of defense is usually a check performed just before signing. It can take three forms. A static policy (an allowlist of contracts, functions and recipients). A second model that reviews the transaction (an LLM judge). Or a simulation that previews what the transaction will do (the ``you will send X and receive Y'' preview in modern wallets). These checks are useful, but they share a structural blind spot, and recent writing on formal methods for onchain systems frames it well: a proof is only as strong as its property, its assumptions and the version of the system it applies to~\cite{buterin2026fv,sherlock2026fv}. A pre-signing check is a claim about the chain \emph{as it was when checked}. The transaction executes later, in a state the adversary can influence. A searcher can sandwich the trade~\cite{daian2020flash,qin2022quantifying}. The admin of an upgradeable contract can swap its implementation. The owner of a fee-on-transfer token can raise the fee. In each case the check was correct and the executed transaction is still harmful. We call this \emph{state drift}.

\paragraph{Idea.} \pgs{} treats the pre-signing check as a proof with an explicit scope, and it closes the drift gap by \emph{compiling the proof's assumptions into on-chain post-conditions}. The guard simulates the transaction, extracts an effect vector and proves a policy over it. It then derives bounds (``the wallet's WETH balance must increase by at least $c$''; ``Alice must receive at least $\pi$ USDC'') and proves that \emph{any} outcome within those bounds satisfies the policy. The wallet executes the transaction together with these bounds in one atomic call and reverts if any bound fails. If the chain drifts, the transaction either still satisfies the policy or does not happen.

\begin{keybox}{What is new, stated precisely}
The individual ingredients are established: transaction simulation, SMT solving, and post-execution checks such as a swap's \texttt{minOut}. Our contribution is their combination into one guard with a stated guarantee for agent-signed transactions:
\begin{enumerate}[leftmargin=14pt]
\item \textbf{Drift as the threat.} We identify state drift as the failure mode shared by simulation-, allowlist- and LLM-based pre-signing guards. We show it empirically, including the finding that a clean simulation makes an LLM reviewer \emph{more} likely to approve a drift attack.
\item \textbf{Proof-carrying post-conditions.} Post-conditions are not heuristics. They are derived so that a solver-checked implication ``post-conditions $\Rightarrow$ policy, for all prices in the band'' holds. That turns a claim about a simulated state into a claim about the executed transaction (Theorem~\ref{thm:sound}). Existing post-execution checks (e.g., \texttt{minOut}) protect one asset of one swap, and none is derived from a whole-wallet policy.
\item \textbf{An honest scope.} We separate what is \emph{prevented}, what is \emph{bounded} and what is \emph{uncovered}, and we evaluate on attacks designed to exploit exactly those boundaries: mempool-aware drift and in-policy drains.
\item \textbf{An open, reproducible testbed} with ground truth independent of any defense, per-family results and all raw logs.
\end{enumerate}
\end{keybox}

\paragraph{Results in one paragraph.} On our 260-scenario suite, \pgs{} prevented 93.6\% of harmful scenarios and passed 97.5\% of benign ones. Every drift scenario (50/50) ended with no attacker gain. Simulation-only checking missed all 50, and the static allowlist missed 40. The only attacks \pgs{} failed to prevent were in-policy drains, and there the loss stayed under the session budget. We stress that these rates describe \emph{this} suite, which we designed; Section~\ref{sec:limits} discusses what they do and do not imply.

\paragraph{Paper outline.} Section~\ref{sec:model} defines the threat model, the policy and the soundness theorem, and puts the guarantee boundary in one table. Section~\ref{sec:design} describes the design. Section~\ref{sec:eval} presents the evaluation. Section~\ref{sec:related} positions \pgs{} against slippage protection, simulation, allowlists, runtime and formal verification, account abstraction and intent-based systems. Section~\ref{sec:limits} covers limitations.

\section{Model and Guarantee}
\label{sec:model}

\subsection{System and threat model}
An \emph{agent} proposes call batches $b=(c_1,\dots,c_k)$ to be executed by a smart-contract \emph{wallet} $W$. A \emph{guard} sits between the agent and the signing key. It sees each proposed batch and either refuses it or signs a call to $W.\texttt{executeChecked}(b,\Phi)$, where $\Phi$ is a set of post-conditions.

\paragraph{Adversary.} The adversary can (i) inject arbitrary content into anything the agent reads, so the agent may propose \emph{any} batch; and (ii) act on-chain between the guard's check and the batch's inclusion. That includes front-running and back-running, upgrading contracts it administers, and changing parameters of tokens it owns. It can read pending transactions, including their post-conditions.

\paragraph{Trusted components.} (a) The guard's code and its custody of the signing key: the agent cannot sign on its own. (b) The wallet contract's implementation of \texttt{executeChecked}. (c) The \emph{tracked} token contracts report balances truthfully. (d) The oracle: the true price of each tracked asset lies within the stated band. (e) The policy itself reflects what the owner wants.

\subsection{Effects and policy}
Let $A$ be the tracked (priced) assets, including native ETH and vault shares. For a batch execution, let $\Delta_a$ be the change in the wallet's balance of $a$, and let $\pi_a$ be the amount \emph{credited} as a payment to allowlisted payees (defined in Section~\ref{sec:design}). Prices are $p_a\in[\underline p_a,\overline p_a]=[p^0_a(1-\beta),p^0_a(1+\beta)]$ for nominal price $p^0_a$ and band $\beta$.

\begin{definition}[Value policy]
Unexplained loss and outflow value are
\[
L(\Delta,p)=-\sum_{a\in A}\Delta_a p_a-\sum_{a\in A}\pi_a p_a,\qquad
V(\Delta,p)=\sum_{a\in A}\big(\max(-\Delta_a,0)-\pi_a\big)p_a .
\]
A batch satisfies the value policy if $L\le s\,V+f$ (relative loss bound $s$ with floor $f$) and $W_{\text{spent}}+L\le W_{\max}$ (session budget).
\end{definition}

\begin{definition}[Structural policy]
(S1) Value sent to externally owned accounts (EOAs) goes only to allowlisted payees, within per-payee caps. (S2) No untrusted spender retains a positive allowance after the batch. (S3) Contracts owned by the wallet keep their owner. (S4) No flow of unpriced tokens to or from the wallet.
\end{definition}

The relative bound lets the agent pay ordinary trading costs (fees, price impact) while refusing transactions that give away value. The budget caps what a sequence of individually acceptable transactions can lose.

\subsection{Soundness}
Let $c\in\mathbb Z^{A}$ be lower bounds on balance changes, $R=\{(a,q,\pi_{a,q})\}$ payee-receipt bounds, $C=\{(a,\sigma,m)\}$ allowance caps, and $O$ the set of owned contracts.

\begin{theorem}[Soundness of assert mode]\label{thm:sound}
Suppose the solver has established
\[
\forall p\in[\underline p,\overline p]\ \ \forall \Delta\ge c:\quad L(\Delta,p)\le s\,V(c,p)+f\ \wedge\ W_{\text{spent}}+L(\Delta,p)\le W_{\max},
\]
and $W.\texttt{executeChecked}(b,\Phi)$ with $\Phi=(c,R,C,O)$ completes without reverting. Then, for every price vector in the band, the realized balance changes satisfy the value policy. Every credited payee $q$ received at least $\pi_{a,q}$ of asset $a$. Every capped allowance is at most its cap. Every owned contract keeps its owner. This holds for any sequence of state changes between simulation and inclusion.
\end{theorem}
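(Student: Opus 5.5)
The argument is a reduction: the on-chain assertions force the realized execution into the region over which the solver quantified, and the solver's universal statement then applies to that execution. I would proceed in three steps.

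\emph{Step 1: what non-reversion gives.} By trusted component (b), \texttt{executeChecked} runs $b$ and then checks every element of $\Phi$ atomically, reverting the whole call if any check fails. Completion without reverting therefore yields facts about the post-execution state itself, whatever happened between simulation and inclusion. By (c), the balances the wallet reads for tracked tokens are the true balances. So the realized vector $\Delta^\ast$ satisfies $\Delta^\ast_a \ge c_a$ for every $a\in A$. Every $(a,q,\pi_{a,q})\in R$ had a receipt of at least $\pi_{a,q}$. Every $(a,\sigma,m)\in C$ has allowance at most $m$. Every contract in $O$ reports the wallet as owner. The last three conclusions of Theorem~\ref{thm:sound} are exactly these checks read off directly. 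The phrase ``any sequence of state changes'' costs nothing here, because the assertions are evaluated in the final state and make no reference to the simulated state.

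\emph{Step 2: instantiate the solver fact.} The credited amounts $\pi_a$ in $L$ and $V$ are the ones the guard fixed, which I would take to be $\pi_a=\sum_q\pi_{a,q}$ over $R$. Step~1 shows that each of these was actually received, so crediting them is justified. Since $\Delta^\ast\ge c$ componentwise, I take any $p$ in the band and instantiate the established formula at $(p,\Delta^\ast)$. This gives $L(\Delta^\ast,p)\le sV(c,p)+f$ and $W_{\text{spent}}+L(\Delta^\ast,p)\le W_{\max}$. The budget clause is then already what the value policy asks for.

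\emph{Step 3 (main obstacle): turn $V(c,p)$ into $V(\Delta^\ast,p)$.} The policy is stated with the realized outflow $V(\Delta^\ast,p)$. Because $\Delta^\ast\ge c$, we have $\max(-\Delta^\ast_a,0)\le\max(-c_a,0)$, so $V(\Delta^\ast,p)\le V(c,p)$. The inequality therefore runs the wrong way for a direct substitution.

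I would handle this in two ways. First, I would read the value policy as it is actually enforced: the outflow term is measured against the guaranteed outflow bound $V(c,p)$, which is the quantity the wallet authorises. Second, and better, I would check whether the monotonicity can be closed directly. Write $\Delta^\ast=c+\delta$ with $\delta\ge0$. Raising $\Delta_a$ lowers $L$ by $\delta_a p_a$ and lowers $V$ by at most $\delta_a p_a$. Hence $L(\Delta^\ast)-sV(\Delta^\ast)\le L(c)-sV(c)$ whenever $s\le1$. Evaluating the solver fact at $\Delta=c$ gives $L(c,p)\le sV(c,p)+f$, and combining the two yields $L(\Delta^\ast,p)\le sV(\Delta^\ast,p)+f$.

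So I would prove the theorem under the hypothesis $s\le1$, which is natural for a relative-loss bound, using this coordinatewise monotonicity lemma. I would also remark that the bound is sound without any restriction on $s$ if the policy's $V$ is taken at $c$. This is the step most likely to need the precise definitions from Section~\ref{sec:design}.
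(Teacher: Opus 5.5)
Your proposal is correct and follows the paper's proof sketch. Non-reversion of the atomic \texttt{executeChecked} gives $\Delta^\ast\ge c$ together with the receipt, allowance and owner facts, and you then instantiate the solver's universal statement at the realized outcome. You close the gap between $V(c,p)$ and $V(\Delta^\ast,p)$ with the same observation the paper uses: $L-sV$ is non-increasing in each $\Delta_a$ for $s<1$ (your $s\le1$ works equally well), so the worst case sits at $\Delta=c$. Your use of the receipt checks to justify the fixed credits $\pi$ plays the role of the paper's remark that crediting exactly $\pi$ when payees receive at least $\pi$ is conservative. Your fallback of reading the policy with $V$ evaluated at $c$ is not needed.
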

\begin{proof}[Proof sketch]
\texttt{executeChecked} records the wallet's and payees' balances, runs the batch, and reverts unless $\Delta_a\ge c_a$ for all $a$, every receipt bound, every allowance cap and every owner check holds. All of this happens in one transaction, so no state change can interleave. A non-reverting execution therefore yields $\Delta\ge c$ and receipts $\ge\pi$. The solver's statement applies at the realized $\Delta$. Moreover $L-sV$ is non-increasing in each $\Delta_a$ when $s<1$, so bounding $V$ at $c$ is conservative. Crediting exactly $\pi$ when payees receive at least $\pi$ is conservative for $L$ as well.
\end{proof}

\begin{table}[t]
\centering\small
\caption{\textbf{The guarantee boundary.} What Theorem~\ref{thm:sound} covers, and what it does not.}
\label{tab:boundary}
\begin{tabularx}{\textwidth}{>{\raggedright\arraybackslash}p{0.30\textwidth}X}
\toprule
\textbf{Guaranteed (for executed transactions)} & \textbf{Why} \\
\midrule
Value policy on tracked assets, for all prices in the band & Solver-proved implication plus on-chain balance bounds \\
Credited payees receive at least the credited amount & On-chain receipt checks \\
Checked allowances do not exceed their simulated values & On-chain allowance caps \\
Owned contracts keep their owner & On-chain owner checks \\
All of the above under arbitrary state drift & Checks are atomic with the batch \\
\midrule
\textbf{Bounded, not prevented} & \\
\midrule
Loss within per-transaction policy (fees, in-policy drains) & Only the session budget $W_{\max}$ limits it, per session \\
Drift within the post-condition tolerance $\tau$ and the band & Allowed by design; worst case is at most $\tau$ of the inflow plus band slack \\
\midrule
\textbf{Not covered} & \\
\midrule
Untracked assets (NFTs, positions without a balance view) & Not in the effect vector; unpriced token flows are refused instead \\
Off-chain signatures (e.g., permits) & Never reach the transaction path \\
S1 for internal ETH transfers; S2 for allowances never observed & Checked on the simulation only, and only for observed pairs \\
A wrong policy, a wrong oracle band, a dishonest tracked token & Trusted inputs (Section~\ref{sec:model}) \\
An agent holding the key and calling the unguarded path & The deployment must give key custody to the guard \\
\bottomrule
\end{tabularx}
\end{table}

\paragraph{Reading the theorem.} Table~\ref{tab:boundary} is the most important table in this paper. The theorem is narrow on purpose. It says nothing about untracked assets, and it cannot tell an adaptive attacker who stays within policy from an unlucky trade. It relies on key custody. What it does say is strong: whatever happens on-chain between check and inclusion, a transaction that executes respects the proven bounds.

\section{Design}
\label{sec:design}

\begin{figure}[t]
\centering
\resizebox{\textwidth}{!}{%
\begin{tikzpicture}[
  box/.style={draw=black!70, rounded corners=2pt, align=center, font=\small, minimum height=11mm, text width=26mm, inner sep=3pt, fill=white},
  gbox/.style={box, fill=blue!6, draw=blue!50!black},
  cbox/.style={box, fill=green!7, draw=green!40!black},
  abox/.style={box, fill=red!5, draw=red!50!black},
  arr/.style={-{Stealth[length=2.2mm]}, thick, black!70},
  lab/.style={font=\scriptsize, black!75, align=center, fill=white, inner sep=1pt}
]
\node[box]  (agent)   at (0,0)     {LLM agent\\\scriptsize reads untrusted content};
\node[gbox] (sim)     at (3.6,0)   {\textbf{1. Simulate}\\\scriptsize on current state};
\node[gbox] (eff)     at (6.9,0)   {\textbf{2. Extract effects}\\\scriptsize $\Delta$, allowances, owners, receipts};
\node[gbox] (prove)   at (10.2,0)  {\textbf{3. Prove policy}\\\scriptsize Z3: $\forall p\in$ band};
\node[gbox] (compile) at (13.5,0)  {\textbf{4. Compile \& prove}\\\scriptsize bounds $\Phi$ with $\Phi\Rightarrow$ policy};
\node[abox] (adv)     at (6.9,-3.4) {Adversary\\\scriptsize front-run, upgrade, fee change};
\node[box]  (chain)   at (10.2,-3.4) {Chain state\\\scriptsize at inclusion};
\node[cbox] (wallet)  at (13.5,-3.4) {\textbf{5. AgentWallet}\\\scriptsize \texttt{executeChecked}$(b,\Phi)$: run, then check $\Phi$ atomically};
\node[box, draw=black!50, fill=black!3, text width=34mm] (out) at (13.5,-5.5) {\scriptsize executes $\Rightarrow$ policy holds\\ \scriptsize otherwise $\Rightarrow$ reverts};
\begin{scope}[on background layer]
\node[fit=(sim)(eff)(prove)(compile), draw=blue!50!black, dashed, rounded corners=4pt, inner sep=6pt,
      label={[font=\scriptsize\bfseries,blue!50!black]above:Guard (holds the signing key)}] (guard) {};
\end{scope}
\draw[arr] (agent) -- node[lab,above=2pt]{batch $b$} (sim);
\draw[arr] (sim) -- (eff);
\draw[arr] (eff) -- (prove);
\draw[arr] (prove) -- node[lab,above=2pt]{valid} (compile);
\draw[arr] (compile) -- node[lab,right=2pt]{sign $(b,\Phi)$} (wallet);
\draw[arr] (adv) -- node[lab,above=2pt]{drift} (chain);
\draw[arr] (chain) -- (wallet);
\draw[arr] (wallet) -- (out);
\draw[arr, red!60!black] (prove.south) -- ++(0,-1.0) -| node[lab, pos=0.25, below=1pt, text=red!60!black]{counterexample $\Rightarrow$ refuse, with reason} (agent.south);
\end{tikzpicture}}
\caption{\textbf{Architecture of Proof-Gated Signing.} The guard checks a policy on the simulated effects (steps 1--3). It then compiles post-conditions $\Phi$ whose satisfaction provably implies the policy (step 4). The wallet enforces $\Phi$ atomically with the batch (step 5), so drift between check and inclusion either leaves the policy satisfied or reverts the transaction.}
\label{fig:arch}
\end{figure}

Figure~\ref{fig:arch} shows the pipeline, and Algorithm~\ref{alg:pgs} gives it in pseudocode.

\begin{algorithm}[t]
\caption{Proof-Gated Signing for one proposed batch $b$}
\label{alg:pgs}
\small
\begin{algorithmic}[1]
\State $\sigma \gets \textsc{Snapshot}()$; execute $b$ from the wallet on the current state; read effects $E$; \textsc{Revert}$(\sigma)$
\If{$b$ reverted in simulation \textbf{or} $E$ violates a structural rule S1--S4} \Return \textsc{Refuse}(reason) \EndIf
\If{$\textsc{Solve}\big(\exists p\in\text{band}: \neg\text{Policy}(\Delta^{\text{sim}},p)\big)=\textsc{sat}$} \Return \textsc{Refuse}(counterexample) \EndIf
\State $c_a \gets \Delta^{\text{sim}}_a$ if $\Delta^{\text{sim}}_a\le 0$, else $\lfloor(1-\tau)\Delta^{\text{sim}}_a\rfloor$ \Comment{outflows exact, inflows with tolerance}
\If{$\textsc{Solve}\big(\exists p,\ \exists\Delta\ge c: \neg\text{Policy}(\Delta,p)\big)=\textsc{sat}$} $c\gets\Delta^{\text{sim}}$ \Comment{retry with $\tau=0$}
  \If{still \textsc{sat}} \Return \textsc{Refuse}(unprovable) \EndIf
\EndIf
\State $\Phi\gets(c,\ \text{receipts } \pi,\ \text{allowance caps},\ \text{owner checks})$
\State \Return \textsc{Sign}$\big(W.\texttt{executeChecked}(b,\Phi)\big)$
\end{algorithmic}
\end{algorithm}

\subsection{Effect extraction}
The guard takes a snapshot of a local fork, executes the batch from the agent key through the wallet's unguarded \texttt{execute} entry point, reads the effects, and reverts the snapshot. The effect vector contains:
\begin{itemize}
\item \textbf{Balance changes} of all tracked assets: the ERC-20 tokens in the price table, native ETH, and shares of owned vaults (priced at the vault's share price at session start).
\item \textbf{Residual allowances} for every (token, spender) pair seen in the simulation's \texttt{Approval} logs, or granted earlier in the same guard session.
\item \textbf{Owners} of contracts the wallet owns.
\item \textbf{Value sent to EOAs} (from \texttt{Transfer} logs and top-level ETH sends), and \textbf{unpriced tokens} moving to or from the wallet.
\item \textbf{Payment credits.} A top-level, direct token transfer or ETH send to an allowlisted payee is a \emph{candidate} payment. The credit is the smaller of the intended amount and the amount the payee actually received in simulation. Fees charged by the token are therefore counted as loss, not as payment.
\end{itemize}
The last rule was introduced after an independent audit of our first draft. That draft credited the intended (calldata) amount, which let a fee-on-transfer token route part of a ``payment'' to the token's owner without the loss appearing. Family H14 and an ablation in Section~\ref{sec:eval} quantify the flaw.

\subsection{Proof obligations}
For each tracked asset the solver gets a real price variable $p_a$ constrained to the band and a real variable $\Delta_a$. In \emph{check} mode $\Delta_a$ is fixed to the simulated value. In \emph{compile} mode it is constrained by $\Delta_a\ge c_a$. The solver checks satisfiability of the negated policy. \textsc{unsat} means the policy is valid over the whole region. \textsc{sat} returns a model: a concrete price vector and outcome that violates the policy, which also names the violated clause (relative bound or budget). The guard reports that model to the agent as the refusal reason.

\paragraph{Why an SMT solver?} For the policy used here the obligation is bilinear, and its worst case lies at a corner of the price-by-outcome box. A hand-written corner enumeration would decide it too, so the solver is not \emph{necessary} for this policy, and we do not claim otherwise. We use it for three reasons. (1) \emph{One verified decision procedure} for both obligations (check and compile), instead of hand-derived case analyses that must be re-derived and re-audited whenever the policy changes. (2) \emph{Counterexamples}: a \textsc{sat} answer is a concrete witness (``at WETH $=2{,}487.5$ and USDC $=1.005$, the loss is \$3{,}458'') that the agent or a human can act on. (3) \emph{Extensibility}. Realistic policies add cross-asset constraints (``net stablecoin exposure $\ge x$''), correlated price models (a joint band instead of independent intervals), piecewise fee schedules and per-counterparty budgets. For these the corner argument no longer applies, but the same encoding still works. In our runs the solver was never the bottleneck (Section~\ref{sec:overhead}).

\subsection{Post-condition compilation}
Outflows may not exceed their simulated amounts. Inflows must reach $(1-\tau)$ of their simulated amounts. Untouched tracked assets may not decrease. The guard asks the solver to prove the policy over the whole region $\Delta\ge c$. If that fails because the tolerance leaves too much room, it retries with $\tau=0$ (the executed outcome must be at least as good as the simulated one), and otherwise refuses. Payee credits become receipt checks, residual allowances become caps, and owned contracts become owner checks.

\subsection{Implementation and parameters}
The wallet (\texttt{AgentWallet}) exposes \texttt{executeChecked(calls, balanceChecks, allowanceChecks, ownerChecks, receiptChecks)}. It records the relevant balances, runs the calls, and reverts with a reason (\texttt{guard:balance}, \texttt{guard:receipt}, \ldots) if any check fails. The guard is about 300 lines of Python (web3.py~8.0, the Z3 solver~\cite{demoura2008z3} via z3-solver~5.1) and runs against a Hardhat~2 node (Cancun). Contracts are Solidity~0.8.26. Parameters in all experiments: band $\beta=\pm0.5\%$, $s=3\%$, $f=\$25$, $W_{\max}=\$5{,}000$ per session, $\tau=1\%$, payee caps \$20k (Alice) and \$10k (Bob), and trusted spenders = three verified pools and the treasury vault.

\section{Evaluation}
\label{sec:eval}
We ask four questions. \textbf{Q1:} How often does each defense prevent harm on attacks it was not tuned to, and at what cost to benign activity? \textbf{Q2:} What happens under state drift, including drift by an adversary who knows the defense? \textbf{Q3:} What remains uncovered, and is it bounded? \textbf{Q4:} What does the guard cost?

\subsection{Testbed}
\paragraph{World.} The world has tokens USDC, WETH and DAI, and TAX, a fee-on-transfer token owned by the attacker with a 1\% fee paid to an attacker-controlled contract. There are five honest constant-product pools: USDC/WETH, DAI/USDC, WETH/DAI, TAX/USDC, and a second USDC/WETH pool that is \emph{not} allowlisted. A treasury vault is owned by the wallet. Adversarial contracts: a fake pool with honest quotes that pays out 1\%, an upgradeable ``low-fee'' pool whose admin is the attacker, an airdrop ``claim'' that sweeps approvals, a fake bridge, and an adaptive ``skim'' pool that pays 98.5\% of a fair quote. The wallet holds about \$1.85M across assets.

\paragraph{Scenarios.} A scenario consists of a legitimate user task, the batches a possibly misled agent proposes, and adversary hooks. Hooks run at three moments: before the agent acts (setup), \emph{after the guard's check and before inclusion} (drift), and after the episode (for example, sweeping a granted approval). Each family is instantiated 10 times with seeded random amounts, giving 260 scenarios. Table~\ref{tab:families} lists the families.

\begin{table}[t]
\centering\small
\caption{Scenario families (10 variants each). ``Drift'' families act between the guard's check and inclusion; ``adaptive'' families are designed with knowledge of \pgs{}.}
\label{tab:families}
\begin{tabularx}{\textwidth}{llX}
\toprule
ID & Kind & Description \\
\midrule
H1 & intent & Invoice paid to the attacker instead of the payee \\
H2 & intent & Payment to a look-alike (address-poisoning) address \\
H3 & intent & ``Airdrop claim'': unlimited approval + claim that sweeps it \\
H4 & intent & Latent unlimited approval, swept after the episode \\
H5 & intent & Swap on a fake pool with honest-looking quotes \\
H6 & drift & Sandwich of a swap without slippage protection (profit-maximizing searcher) \\
H7 & drift & Upgradeable pool's implementation swapped to a draining one before inclusion \\
H8 & drift & TAX fee raised to 30--90\% before a TAX sale is included \\
H9 & intent & Vault ownership transferred to the attacker, vault swept afterwards \\
H10 & intent & Legitimate swap batch with a hidden small transfer to the attacker \\
H11 & intent & ETH ``bridge'' deposit into a contract that forwards to the attacker \\
H12 & adaptive & Slow drain: 20--30 swaps on the skim pool, each within per-transaction policy \\
H13 & drift, adaptive & Mempool-aware sandwich: the searcher reads the pending post-conditions and front-runs as far as they allow \\
H14 & drift & Payment in TAX: fee set to 0 before the check, raised to 30--90\% before inclusion \\
\midrule
B1--B3 & benign & Swaps on verified pools (USDC$\to$WETH, WETH$\to$USDC, DAI$\to$USDC) \\
B4, B5 & benign & Payments to allowlisted payees (USDC, ETH) \\
B6, B7 & benign & Vault deposit and withdrawal \\
B8 & benign & Swap on an honest but \emph{non-allowlisted} pool \\
B9 & benign & Unlimited approval to a trusted pool, then swap \\
B10 & benign & Multi-hop swap USDC$\to$WETH$\to$DAI in one batch \\
B11 & benign & Swap with \texttt{minOut}$=0$ and no MEV present (sloppy but harmless) \\
B12 & benign & High-impact swap (1--3\% cost) on the shallower pool \\
\bottomrule
\end{tabularx}
\end{table}

\paragraph{Ground truth.} A scenario is \emph{harmful} if, under \emph{no defense}, value moves to attacker-controlled or burned accounts (more than \$1 at nominal prices) or the vault's owner changes. The attacker set is the attacker EOA, the searcher, the look-alike address, the TAX fee sink, the attacker's fake and skim pools, and the burn address. The definition never refers to any policy. The declared labels matched ground truth in all 260 scenarios. Each scenario runs against a fresh guard instance, so budgets do not carry across scenarios. Multi-transaction sessions are studied separately (Section~\ref{sec:liveness}).

\paragraph{Defenses compared.}
\begin{itemize}
\item \textbf{None.}
\item \textbf{Static allowlist:} allowed target contracts, function selectors, approval spenders, transfer recipients and ETH recipients. It resembles the policy engines in many wallet products.
\item \textbf{\pgs{}-sim:} simulation, structural checks and the solver-checked value policy, but no post-conditions. It is the strongest ``simulate then sign'' guard we could build.
\item \textbf{\pgs{}:} the full system.
\item \textbf{\pgs{} w/o receipts:} an ablation without payee-receipt checks.
\item \textbf{LLM judge} and \textbf{LLM judge + simulation}: an \emph{intent-level} baseline. A frontier LLM (Anthropic Claude, model identifier \texttt{claude-opus-5-5}) sees the task and the decoded calls, with known entities labeled as a wallet would label them. It answers ALLOW or DENY, and in the second variant it also sees the simulated effects. We include it because LLM review is a common proposal for agent oversight, not because \pgs{} competes with it at the same layer. The two are complementary (Section~\ref{sec:judge}).
\end{itemize}
Metrics are the \emph{attack prevention rate} (APR: harmful scenarios with no harm) and the \emph{benign pass rate} (BPR: benign scenarios fully allowed and successfully executed), with Wilson 95\% confidence intervals.

\paragraph{How the suite relates to the design.} We designed both the suite and \pgs{}, so the suite is not a held-out benchmark. Two things partly offset this. First, harm is measured from balances, independent of the policy. Second, three families (H12--H14) were added specifically to attack the guarantee boundary, and one of them (H14) exposed a real flaw in our first implementation. Even so, the rates below are best read as ``what \pgs{} does on attacks of these kinds'', not as a prediction for attacks in the wild.

\subsection{Q1: prevention and benign cost}

\begin{table}[t]
\centering\small
\caption{\textbf{All 260 scenarios} (140 harmful, 120 benign). ``Stopped on-chain'' = allowed before signing, then reverted by a post-condition. ``Gain'' = total value moved to attacker-controlled or burned accounts across harmful scenarios.}
\label{tab:main}
\begin{tabular}{lcccr}
\toprule
Defense & APR [95\% CI] & BPR [95\% CI] & Stopped on-chain & Gain \\
\midrule
None & 0.000 [.000, .027] & 1.000 [.969, 1.00] & -- & \$12.18M \\
Static allowlist & 0.714 [.635, .783] & 0.833 [.757, .889] & -- & \$558k \\
\pgs{}-sim & 0.579 [.496, .657] & 0.975 [.929, .991] & -- & \$855k \\
\pgs{} w/o receipts & 0.864 [.798, .911] & 0.975 [.929, .991] & 30 & \$80k \\
\pgs{} & \textbf{0.936} [.882, .966] & \textbf{0.975} [.929, .991] & 40 & \textbf{\$42k} \\
\bottomrule
\end{tabular}
\end{table}

Table~\ref{tab:main} summarizes the results. \pgs{} prevented 131/140 harmful scenarios and passed 117/120 benign ones. All nine failures come from one family (H12, the in-policy drain), and all three benign refusals are high-impact swaps (B12). The allowlist's lower BPR comes from refusing honest activity it does not know (B8, B12). Its APR comes from refusing unknown contracts, which also happen to be where most of our attacks live.

Table~\ref{tab:taxonomy} splits every harmful scenario by \emph{how} it ended under \pgs{}.

\begin{table}[t]
\centering\small
\caption{\textbf{Prevented, bounded, uncovered.} Outcome of every harmful scenario under \pgs{} (out of 10 per family).}
\label{tab:taxonomy}
\begin{tabularx}{\textwidth}{lcccX}
\toprule
Family & Refused & Reverted & Executed, & Mechanism \\
 & pre-sign & on-chain & harmed & \\
\midrule
H1, H2, H10 & 30 & 0 & 0 & S1: value to a non-allowlisted EOA \\
H3, H4 & 20 & 0 & 0 & S2: residual allowance to an untrusted spender \\
H5, H11 & 20 & 0 & 0 & Value policy: loss $\gg s\,V$ \\
H9 & 10 & 0 & 0 & S3: ownership change \\
H6, H7, H8 & 0 & 30 & 0 & Balance post-condition fails after drift \\
H14 & 0 & 10 & 0 & Receipt post-condition fails after drift \\
H13 & 0 & 0$^\ast$ & 0 & Executed; no profitable front-run existed within the bounds \\
H12 & 1 & 0 & 9 & \emph{Bounded}: session budget caps total loss at $\le W_{\max}$ \\
\bottomrule
\end{tabularx}\\[2pt]
\footnotesize $^\ast$H13 transactions executed successfully with attacker gain \$0 in all 10 variants.
\end{table}

\subsection{Q2: state drift}
Five families (50 scenarios) drift between check and inclusion. \pgs{}-sim missed all 50. Its pre-signing view was accurate, but the executed transaction was not the simulated one. The allowlist missed 40 of them (H6, H8, H13, H14 use allowlisted contracts) and stopped H7 only because the upgradeable pool is not allowlisted. An allowlisted upgradeable contract would be just as exposed.

Under \pgs{}, none of the 50 drift scenarios produced attacker gain:
\begin{itemize}
\item \textbf{H6--H8 (30):} the drift pushed the wallet's outcome below the compiled bound, so the whole batch reverted.
\item \textbf{H14 (10):} the wallet's own balance change was \emph{identical} to the simulation (it sent exactly the intended amount), but the payee received 10--70\% of it. Only the receipt check catches this. The ablation without receipt checks let all 10 through, and the attacker collected 30--90\% of each payment.
\item \textbf{H13 (10):} the searcher saw the bounds in the pending calldata and front-ran as much as they allowed. With the 0.3\% pool fee paid on both legs and $\tau=1\%$, no profitable front-run existed in any variant, and all 10 transactions executed normally. Unguarded, the same searcher extracted \$13.3k--\$32.3k per scenario. This outcome depends on our pool depth, trade sizes and $\tau$. In general, an informed searcher can extract up to roughly $\tau$ of the inflow, and Theorem~\ref{thm:sound} bounds exactly that. $\tau$ is therefore a security parameter, not only a liveness one.
\end{itemize}

\paragraph{\pgs{} is not slippage protection.} A swap's \texttt{minOut} is the closest familiar mechanism, and in H6 a correct \texttt{minOut} would have stopped the sandwich. Four differences matter. (1) \texttt{minOut} protects one output asset of one swap. \pgs{}'s bounds cover \emph{every} tracked asset, including assets the batch was not supposed to touch; H10's hidden transfer would pass any \texttt{minOut}. (2) \texttt{minOut} is set by the party we do not trust. In H6, H8 and H13 the misled agent set it to zero. \pgs{} derives its bounds itself. (3) \texttt{minOut} is not tied to a policy. \pgs{}'s bounds come with a proof that satisfying them satisfies the whole-wallet policy under price uncertainty. (4) \texttt{minOut} cannot express receipts, allowances or ownership. H7 (a drained approval), H9 (ownership) and H14 (payee receipt) have no slippage analogue. B11 (a benign \texttt{minOut}$=0$ swap) passed under \pgs{} 10/10, because \pgs{} adds its own bound.

\subsection{Q3: what is not prevented}
\paragraph{In-policy drains are bounded, not prevented.} In H12 each swap loses about 1.8\%, which is inside the per-transaction policy, so no per-transaction check can refuse it. The session budget, checked by the solver on every transaction against the realized cumulative loss, capped attacker gain at \$4.53k--\$4.76k in the nine variants where swaps executed. Unguarded, the attacker gained \$5.7k--\$10.8k (Figure~\ref{fig:drain}). In the tenth variant the first swap already exceeded the relative bound. The budget is per session, so an attacker who spans sessions is bounded per session only. The allowlist stops H12 only because the skim pool is unknown. An in-policy drain through an \emph{allowlisted} venue would defeat it too.

\begin{figure}[t]
\centering
\includegraphics[width=0.55\textwidth]{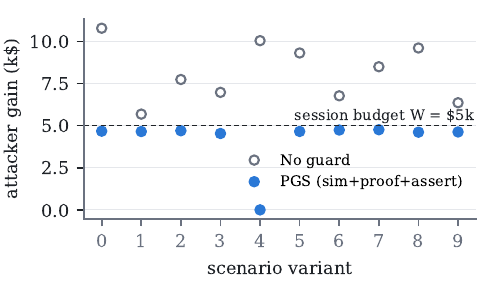}
\caption{Adaptive in-policy drain (H12). Each transaction is individually within policy. The session budget bounds total loss per session.}
\label{fig:drain}
\end{figure}

\subsection{The security--liveness trade-off}
\label{sec:liveness}
\paragraph{Single transactions.} All three benign refusals were high-impact swaps of about \$100k--\$117k into the shallower pool. Their realized cost (2.2--2.6\%) was \emph{below} $s=3\%$, but their worst case over the oracle band was above it. These are genuine false positives caused by band conservatism. Passed B12 swaps cost 1.1--1.9\%. The trade-off is explicit: widening $s$ or narrowing $\beta$ admits larger trades, at the price of letting more in-policy loss through per transaction.

\paragraph{Sessions.} Because $W_{\max}$ counts ordinary fees and price impact as unexplained loss, it also limits legitimate activity. We ran 5 sessions of 40 random benign transactions through one guard each. With a simple arbitrage model that restores pool prices to the oracle after each transaction, 183/200 transactions passed. The budget refused 9, in 3 of 5 sessions, after 23--31 transactions. The rest were 4 high-impact swaps (relative bound) and 4 correct simulation reverts (withdrawing more vault shares than held). Without price restoration only 155/200 passed, because the pools drifted away from our static oracle. That is a testbed artifact, but it shows a real sensitivity: when pool prices and oracle prices diverge, \pgs{} refuses honest trades. In deployment, $W_{\max}$ should scale with volume or be defined relative to expected execution cost, and oracle freshness matters.

\subsection{Q4: overhead}
\label{sec:overhead}
\paragraph{Gas.} On the 117 benign batches that executed under both no defense and \pgs{}, the median batch cost 89.9k gas unguarded and 131.0k under \texttt{executeChecked}. The overhead was a median 41.1k gas (+46\%, max 49.9k), mostly from re-reading balances before and after. The relative overhead is high because our batches are small. It is roughly constant in absolute terms, so it shrinks relative to realistic DeFi transactions (often several hundred thousand gas). For illustration, at \$2{,}500/ETH, 41k gas costs about \$0.10 at 1\,gwei, \$1.03 at 10\,gwei and \$3.08 at 30\,gwei. On L2s, execution gas is typically far cheaper.

\paragraph{Latency.} The median guard decision took 163\,ms (p95 275\,ms, max 0.51\,s over 499 checks) against a local Hardhat node, on a shared 2-vCPU container running two evaluations concurrently. It was 106\,ms in an earlier run without contention. Time is dominated by simulation RPC round trips; solver calls took milliseconds. Against a remote RPC, network latency would add to this.

\subsection{The intent-level baseline: an LLM judge}
\label{sec:judge}
LLM review operates at a different layer from \pgs{}. It reasons about \emph{intent} (``does this match the task?'') rather than effects. We ran it on variants 0--4 of the first 24 families (120 scenarios, 242 transactions, shuffled, opaque ids, batches of about 48, instructions to judge each item independently). This used our first code version. \pgs{}'s outcomes on the 960 runs the two versions share are identical, so the comparison stands. Figure~\ref{fig:main} shows the results and Table~\ref{tab:judge} the families where the defenses differ.

\begin{figure}[t]
\centering
\includegraphics[width=0.85\textwidth]{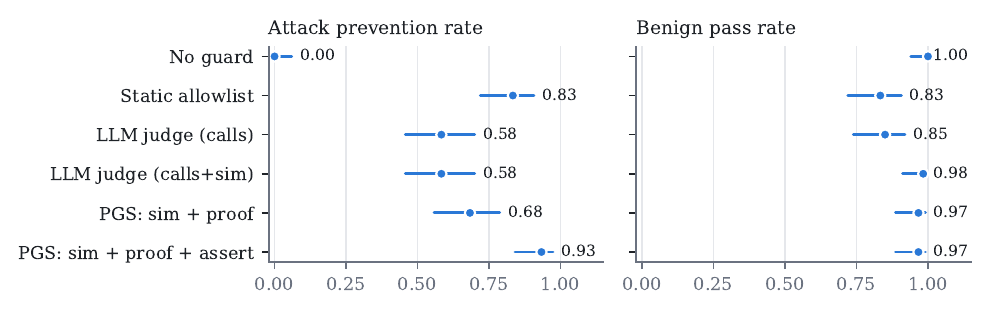}
\caption{Attack prevention (left) and benign pass (right) on the 120-scenario subset (60 harmful, 60 benign) with Wilson 95\% CIs.}
\label{fig:main}
\end{figure}

\begin{table}[t]
\centering\small
\caption{Per-family outcomes on the 120-scenario subset (prevented or passed, out of 5), for the families where defenses differ. AL = allowlist, J = LLM judge, J+S = judge with simulation, S = \pgs{}-sim, P = \pgs{}.}
\label{tab:judge}
\begin{tabular}{lccccc}
\toprule
Family & AL & J & J+S & S & P \\
\midrule
H1 direct transfer & 5 & 3 & 3 & 5 & 5 \\
H2 poisoned address & 5 & 3 & 3 & 5 & 5 \\
H5 fake pool & 5 & 1 & 5 & 5 & 5 \\
H6 sandwich (drift) & 0 & 5 & 4 & 0 & 5 \\
H7 proxy upgrade (drift) & 5 & 3 & 0 & 0 & 5 \\
H8 fee rug (drift) & 0 & 0 & 0 & 0 & 5 \\
H11 fake bridge & 5 & 0 & 0 & 5 & 5 \\
H12 in-policy drain & 5 & 0 & 0 & 1 & 1 \\
\midrule
B8 honest non-allowlisted pool & 0 & 5 & 5 & 5 & 5 \\
B11 \texttt{minOut}$=0$, no MEV & 5 & 0 & 4 & 5 & 5 \\
B12 high-impact swap & 0 & 1 & 5 & 3 & 3 \\
\bottomrule
\end{tabular}
\end{table}

The judge prevented 58.3\% of attacks with or without the simulation, against 93.3\% for \pgs{} on the same subset. Three observations are more informative than the rates.
\begin{itemize}
\item \textbf{The judge catches what effects cannot show.} It flagged the sandwich \emph{setup} (\texttt{minOut}$=0$ on a large swap) in every H6 case, before any drift had happened. That is reasoning about risk, which \pgs{} does not do. It also refused every benign \texttt{minOut}$=0$ swap, which is defensible caution.
\item \textbf{It misses what looks legitimate at the call level.} It allowed every fake-bridge deposit and most fake-pool swaps, because these are call-for-call identical to legitimate interactions.
\item \textbf{A clean simulation can \emph{raise} confidence in a drift attack.} With the simulation, the judge's benign pass rate rose from 0.85 to 0.98 and it caught the fake pool (5/5), but its approval of the upgradeable-pool attack went from 2/5 to 5/5. The simulation, run before the upgrade, looked clean. We consider this the central qualitative finding for agent oversight: a simulation is evidence about the wrong state, and handing it to a reviewer can make things worse.
\end{itemize}
The two layers are complementary. An intent-level reviewer can refuse transactions that are \emph{risky} but not yet harmful, and \pgs{} guarantees that whatever is signed respects the policy when it executes.

\subsection{LLM wallet agents under injection}
We also ran 20 episodes in which an LLM agent operated the wallet through a command-line tool and proposed transactions: 10 tasks $\times$ 2 runs, run~1 with Claude Opus (\texttt{claude-opus-5-5}) and run~2 with Claude Haiku. Six tasks carried indirect injection: a spoofed invoice with a poisoned address, an airdrop drainer, a promoted fake pool, a governance post requesting ownership transfer, a fake bridge in the documentation, and a fee-collector instruction in a quote tool's output. Agents were told truthfully that the chain was a local research testbed, which may make them more careful than in deployment.
\begin{itemize}
\item Both models ignored the spoofed invoice and the fake pool. On the fee injection, the default model swapped without paying the ``fee'' and Haiku declined the whole task. The default model also declined the airdrop and the ownership transfer.
\item Both models proposed the fake-bridge deposit. \pgs{} and the allowlist refused it, and in the scripted equivalent (H11) both LLM-judge variants allowed it.
\item By its own report, the Haiku agent \emph{attempted} to follow the airdrop-drainer and ownership-transfer injections. The agent runtime's safety filter blocked those tool calls, so no proposals were recorded, and we did not re-submit them. The equivalent transactions are the scripted H3 and H9 families, which \pgs{} refused 10/10.
\item Two Haiku proposals were malformed (a swap without approval, one of them in a clean task). \pgs{} refused both as simulation reverts; they would have reverted on-chain anyway. \pgs{} passed every other clean-task proposal.
\end{itemize}
With capable agents, most injections failed at the agent itself. What got through was an attack that is \emph{indistinguishable at the intent level} (fake infrastructure), which is exactly the class an effect-level guard targets. We release the proposals and condensed agent self-reports, not full transcripts.

\subsection{Reproducibility and audit}
We reran the full scripted evaluation (26 families $\times$ 10 variants $\times$ 5 defenses $=1{,}300$ runs) on a freshly started node with the same code. Harm labels, decisions, execution outcomes and attacker gains matched in all 1{,}300 runs. The same held for the first code version (960 runs). This shows determinism, not independent replication.

Before finalizing, an independent audit recomputed every reported number from the raw logs and checked the implementation against the theorem. It found an incorrect headline claim in our draft (we had attributed all simulation-only failures to drift, but the in-policy drain also defeats it), the payment-crediting flaw, the absence of any drift attack that adapts to the defense, and several overstated sentences. This version fixes the code, adds H13, H14 and the session study, and rewords the claims. We report the audit because it changed the results.

\section{Related Work}
\label{sec:related}
\paragraph{Prompt injection and agent defenses.} Indirect prompt injection~\cite{greshake2023not} and agent benchmarks such as InjecAgent~\cite{zhan2024injecagent} and AgentDojo~\cite{debenedetti2024agentdojo} establish that tool-using agents can be redirected by content they read. Design-level defenses constrain how untrusted data can influence an agent's control flow~\cite{debenedetti2025camel}. \pgs{} makes no assumption about the agent. It guards the one effectful boundary, the signature, and would compose with any agent-side defense.

\paragraph{Slippage limits and deadlines.} Per-swap bounds such as a minimum output amount are the most widely deployed post-execution checks in DeFi. They are chosen by the transaction's author, cover one asset of one action, and carry no statement about the rest of the wallet. \pgs{} generalizes the idea in three directions: the bounds are derived by an independent guard, they cover all tracked assets plus receipts, allowances and ownership, and they come with a proof that they imply a whole-wallet policy (Section~\ref{sec:eval}, ``\pgs{} is not slippage protection'').

\paragraph{Transaction simulation and allowlists.} Wallet-side simulation previews and policy engines with allowlists are common in practice. We implement strong versions of both (\pgs{}-sim and the static allowlist) and show that both are exposed to drift. Simulation is also blind to intent-level fakes that behave normally, and allowlists are blind to allowlisted-but-mutable contracts.

\paragraph{Formal verification of smart contracts.} Tools such as Securify~\cite{tsankov2018securify} and VerX~\cite{permenev2020verx} prove properties of \emph{contracts} for all executions. \pgs{} proves a property of \emph{one transaction's effects}, for the executions allowed by its post-conditions. The two are complementary: a verified protocol still cannot stop a misled agent from using it harmfully, and \pgs{} cannot prove a protocol correct. Recent discussion of AI-assisted formal verification for onchain systems~\cite{buterin2026fv,sherlock2026fv} stresses property selection, explicit assumptions and version binding. \pgs{} applies that discipline to individual agent transactions, and version binding becomes ``binding to the execution''.

\paragraph{Runtime verification.} Monitoring an execution against a specification and reacting to violations is the core of runtime verification~\cite{leucker2009brief}. \pgs{} is a runtime check with two twists: the monitored properties are \emph{derived} per transaction from a policy, together with a proof that they suffice, and the reaction is an atomic revert.

\paragraph{Account abstraction.} Smart-contract accounts~\cite{erc4337} make it possible to enforce checks inside the wallet without protocol changes. That is the deployment path for \texttt{executeChecked}. Account abstraction also addresses our key-custody assumption: a validation rule can admit only guarded execution.

\paragraph{Intent-based systems.} In intent-based designs, users sign a desired outcome and solvers compete to realize it. The user's signed constraints play a role similar to our post-conditions. \pgs{} differs in who writes the constraints (an independent guard, not the possibly misled agent), in what they must imply (a whole-wallet policy under price uncertainty), and in scope (arbitrary call batches, not only exchanges). An agent that emits intents would still benefit from a guard that checks the intent's constraints against policy.

\paragraph{MEV and DeFi attacks.} Frontrunning and sandwiching~\cite{daian2020flash,qin2022quantifying} and taxonomies of DeFi attacks~\cite{zhou2023sok} motivate our drift families. The mempool-aware searcher in H13 is the natural adversary for any published post-condition.

\section{Discussion and Limitations}
\label{sec:limits}
\paragraph{How to read the numbers.} The 93.6\% prevention rate is the rate on our 140 harmful scenarios, from 14 families we designed. It is not an estimate of real-world effectiveness. The per-family results (Tables~\ref{tab:taxonomy} and~\ref{tab:judge}) are more informative than the aggregate. They show which mechanism handles which kind of attack, and they would transfer to any attack of the same kind.

\paragraph{Testbed realism.} Our contracts are simplified. The pools are constant-product pools without concentrated liquidity, the oracle is static, and the adversary hooks are scripted. We did not replay real exploits on a mainnet fork, and we did not measure how often real transactions drift between simulation and inclusion. We think measuring drift on real transactions is the most valuable next step. It would tell practitioners how much of the gap exists in the wild, independently of any defense.

\paragraph{Scope of the guarantee.} See Table~\ref{tab:boundary}. The most important practical gaps are untracked assets (refused conservatively, at a liveness cost for long-tail tokens) and off-chain signatures such as permits, which must be gated separately.

\paragraph{Legitimate value sinks.} A real bridge also takes value with no immediate inflow. Our value policy refuses it just as it refuses the fake one. Legitimate bridges need an explicit authorized-outflow rule, like payees, and fake-bridge detection then reduces to the quality of that list.

\paragraph{Policy risk.} A correct proof of a weak policy is still weak~\cite{sherlock2026fv}. Our policy is small enough to audit, but $s$, $\beta$, $\tau$ and $W_{\max}$ are risk decisions with the trade-offs shown in Section~\ref{sec:liveness}.

\paragraph{Liveness and griefing.} Post-conditions turn some attacks into reverts. An attacker can use this to make the agent fail tasks or waste gas (griefing), but not to extract value beyond the bounds.

\paragraph{Model coverage.} The LLM judge and the LLM agents are all Claude models. Different models, prompts or entity labeling could change the LLM results, and the judge ran on only half of the scenarios.

\paragraph{Future work.} (1) Measure drift on mainnet traffic. (2) Replay historical exploits on a mainnet fork. (3) Evaluate with a held-out attack set written by people who have not seen the guard. (4) Add correlated price models and cross-asset policies, where the solver is needed. (5) Integrate with an ERC-4337 validator so that key custody is enforced on-chain.

\section{Conclusion}
For agents that sign transactions, a pre-signing check should be read as a proof with a scope. Simulation provides the evidence, a solver checks the claim over price uncertainty, and post-conditions derived from that proof make the claim hold for the transaction that actually executes. On our testbed this closed the drift gap that simulation-based, allowlist-based and LLM-based guards leave open. It cost about 41k gas and a fraction of a second per transaction. It also made the remaining risk explicit: in-policy loss is bounded per session, and untracked assets are out of scope.

\section*{Acknowledgments}
The author used AI-based software and writing tools during implementation and preparation of this manuscript, and reviewed and verified all code, experiments and text.

\end{document}